\pdfoutput=1
\newif\ifFull
\Fullfalse
\documentclass[11pt]{article}
 \pagestyle{plain}
\topmargin 0pt
\advance \topmargin by -\headheight
\advance \topmargin by -\headsep
\textheight 9in
\oddsidemargin 0pt
\evensidemargin \oddsidemargin
\marginparwidth 0.5in
\textwidth 6.5in
\setlength{\pdfpagewidth}{8.5in}
\setlength{\pdfpageheight}{11in}

%%% PACKAGES %%%%%%%%%%%%%%%%%%%%%%%%%%%%%%%%%%%%%%%%%%%%%%%%%%%%%%%%%

\usepackage{amsfonts}
\usepackage{amsmath}
\usepackage{graphicx}
\usepackage{color}
\usepackage{times}
\usepackage{subfigure}
\usepackage[noend]{algorithmic}
\usepackage{url}

%%% MACROS %%%%%%%%%%%%%%%%%%%%%%%%%%%%%%%%%%%%%%%%%%%%%%%%%%%%%%%%%%%

%---[ Parenthesization ]----------------------------------------------

\newcommand{\newparentheses}[3]{%
  \expandafter\newcommand\csname #1\endcsname[1]{#2##1#3}%
  \expandafter\newcommand\csname #1L\endcsname[1]{\bigl#2##1\bigr#3}%
  \expandafter\newcommand\csname #1XL\endcsname[1]{\Bigl#2##1\Bigr#3}%
  \expandafter\newcommand\csname #1V\endcsname[1]{\left#2##1\right#3}}

\newparentheses{parens}{(}{)}
\newparentheses{braces}{\{}{\}}
\newparentheses{brackets}{[}{]}
\newparentheses{floor}{\lfloor}{\rfloor}
\newparentheses{ceil}{\lceil}{\rceil}
\newparentheses{abs}{|}{|}
\newparentheses{set}{\{}{\}}
\newparentheses{size}{|}{|}
\newparentheses{seq}{\langle}{\rangle}
\newparentheses{pair}{\langle}{\rangle}

%---[ Attributes ]----------------------------------------------------

\makeatletter
\newcommand{\onenewattribute}[4]{%
  \@ifundefined{#2}{\let\@@def\newcommand}{\let\@@def\renewcommand}%
  \expandafter\@@def\csname #2\endcsname[1][]{%
    \def\first@arg{##1}\csname @#2\endcsname}%
  \@ifundefined{@#2}{\let\@@def\newcommand}{\let\@@def\renewcommand}%
  \expandafter\@@def\csname @#2\endcsname[2][]{%
    \ifthenelse{\equal{#1}{sub}}%
    {\csname @@#2\endcsname{##1}{\first@arg}{##2}}%
    {\csname @@#2\endcsname{\first@arg}{##1}{##2}}}
  \@ifundefined{@@#2}{\let\@@def\newcommand}{\let\@@def\renewcommand}%
  \expandafter\@@def\csname @@#2\endcsname[3]{%
    \ifthenelse{\equal{##1}{}}%
    {\ifthenelse{\equal{##2}{}}%
      {#3\csname #4\endcsname{##3}}%
      {#3_{##2}\csname #4\endcsname{##3}}}%
    {\ifthenelse{\equal{##2}{}}%
      {#3^{##1}\csname #4\endcsname{##3}}%
      {#3_{##2}^{##1}\csname #4\endcsname{##3}}}}}
\newcommand{\newattribute}[3][sub]{%
  \onenewattribute{#1}{#2}{#3}{parens}%
  \onenewattribute{#1}{#2L}{#3}{parensL}%
  \onenewattribute{#1}{#2XL}{#3}{parensXL}%
  \onenewattribute{#1}{#2V}{#3}{parensV}}

\newcommand{\newproperty}[3][sub]{%
  \@ifundefined{#2}{\let\@@def\newcommand}{\let\@@def\renewcommand}%
  \expandafter\@@def\csname #2\endcsname[2][]{%
    \ifthenelse{\equal{#1}{sub}}%
    {\ifthenelse{\equal{##1}{}}%
      {#3_{##2}}%
      {#3_{##2}^{##1}}}%
    {\ifthenelse{\equal{##1}{}}%
      {#3^{##2}}%
      {#3_{##1}^{##2}}}}}
\makeatother

%---[ Asymptotic notation ]-------------------------------------------

%\newattribute{OhOf}{\mathrm{O}}
\newattribute{OhOf}{{\cal O}}
\newattribute{ThetaOf}{\Theta}
\newattribute{OmegaOf}{\Omega}
\newattribute{ohOf}{\mathrm{o}}
\newattribute{omegaOf}{\omega}

%---[ Notation ]------------------------------------------------------

\newattribute{scc}{\textrm{scc}}
\newattribute{pred}{\textrm{pred}}

%---[ Macro for comments in the margin ]------------------------------

\def\comment#1{}
\def\withcomments{%
  \addtolength{\oddsidemargin}{-0.5in}%
  \addtolength{\evensidemargin}{-0.5in}%
  \setlength{\marginparwidth}{1in}
  \newcounter{mycommentcounter}%
  \def\comment##1{\refstepcounter{mycommentcounter}%
    \ifhmode
      \unskip
      {\dimen1=\baselineskip
        \divide\dimen1 by 2%
        \raise\dimen1\llap{\tiny -\themycommentcounter-}}%
    \fi
    \marginpar{\renewcommand{\baselinestretch}{0.8}%
      \footnotesize [\themycommentcounter]: \raggedright ##1}}%
  \date{\framebox{Draft of \today}}}
%\withcomments

\setlength{\intextsep}{0pt}

\newcommand{\OUT}{{\cal OUT}}
\newcommand{\IN}{{\cal IN}}
\renewcommand{\emph}[1]{\textit{\textbf{#1}}}

% \ifFull
%
% Okay, call me a hacker, but I like this theorem format.
%
\makeatletter
\def\@begintheorem#1#2{\sl \trivlist \item[\hskip \labelsep{\bf #1\
#2:}]}
\def\@opargbegintheorem#1#2#3{\sl \trivlist
      \item[\hskip \labelsep{\bf #1\ #2\ #3:}]}
\makeatother
\newenvironment{proof}{\noindent{\bf Proof:}}{\hspace*{\fill}\rule{6pt}{6pt}\bigskip}
\newtheorem{theorem}{Theorem}[section]
\newtheorem{lemma}[theorem]{Lemma}

% \fi

%macros

\newcommand{\cT}{{\cal T}}
 % define all packages and macros in preamble.tex

\usepackage{amssymb}

\begin{document}

\title{
Sorting, Searching, and Simulation in the MapReduce Framework}

\author{%
Michael T. Goodrich \\
goodrich@ics.uci.edu \\
Center for Algorithms and Theory of Computation \\
Department of Computer Science \\
University of California, Irvine \\
Irvine, CA 92697-3435, USA
\and
\begin{tabular}{c c}
Nodari Sitchinava & Qin Zhang\\
nodari@madalgo.au.dk  & qinzhang@madalgo.au.dk  
\end{tabular}
\\
MADALGO -- Center for Massive Data Algorithmics \\
Department of Computer Science \\
Aarhus University \\
IT-Parken, Aabogade 34 \\
DK-8200, Aarhus N, DENMARK 
}

\date{}

\maketitle 

\begin{abstract} 
In this paper, we study the
MapReduce framework from an
algorithmic standpoint and demonstrate the usefulness of our approach
by designing and analyzing efficient 
MapReduce algorithms for fundamental sorting, searching, and
simulation problems.
This study is motivated by a goal of
ultimately putting the MapReduce framework on an equal theoretical
footing with the well-known PRAM and BSP parallel models, which would
benefit both the theory and practice of MapReduce algorithms.
We describe efficient MapReduce algorithms for sorting,
   multi-searching, and simulations of parallel algorithms specified in the BSP
   and CRCW PRAM models.  We also provide some applications of these results to
   problems in parallel computational geometry for the MapReduce framework,
   which result in efficient MapReduce algorithms for sorting, 2- and
   3-dimensional convex hulls, and fixed-dimensional linear programming.  For
   the case when mappers and reducers have a
     memory/message-I/O size of $M=\ThetaOf{N^\epsilon}$, for a small constant
     $\epsilon>0$, all of our MapReduce algorithms for these applications run in
     a constant number of rounds.
\end{abstract}

% \category{F.2.2}{Analysis of Algorithms and Problem
% Complexity}{Nonnumerical Algorithms and Problems}
% \terms{Algorithms, Theory}l
% \keywords{parallel algorithms, MapReduce}

\ifFull\else
\setcounter{page}{0}
\thispagestyle{empty}
\fi

\clearpage
\section{Introduction}
\label{sec:intro}

The {\em MapReduce framework}~\cite{dg-msdpl-08,dg-mafdp-10} is a
programming paradigm for designing parallel and distributed algorithms. It
provides a simple programming interface that is specifically designed to
make it easy for a programmer to design a parallel program that can
efficiently perform a data-intensive computation. Moreover, it is a
framework that allows for parallel programs to be directly translated into
computations for cloud computing environments and server clusters (e.g.,
see~\cite{rrbn-smlsa-09}). 
This framework is gaining wide-spread interest
in systems domains, in that this framework is being used in Google data
centers and as a part of the open-source Hadoop system~\cite{w-hdg-09} for
server clusters, which have been deployed by a wide variety of
enterprises\footnote{See \url{http://en.wikipedia.org/wiki/Hadoop}.},
including Yahoo!, IBM, The New York Times, eHarmony, Facebook, and Twitter.

Building on pioneering work by Feldman {\it
  et al.}~\cite{DBLP:conf/soda/FeldmanMSSS08} and Karloff {\it et
  al.}~\cite{ksv-amcfm-10}, our interest in this paper is in studying the
MapReduce framework from an
algorithmic standpoint, by designing and
analyzing MapReduce algorithms for fundamental sorting, searching, and
simulation problems. Such a study could be a step on the way
to ultimately putting the MapReduce framework on an equal theoretical
footing with the well-known PRAM and BSP parallel models.

Still, we would be remiss if we did not mention that this framework is not
without its detractors. DeWitt and Stonebraker~\cite{ds-mmsb-08} mention
several issues they feel are shortcomings of the MapReduce framework,
including that it seems to require brute-force enumeration instead of
indexing for performing searches. Naturally, we feel that this criticism is
a bit harsh, as the theoretical limits of the MapReduce framework have yet
to be fully explored; hence, we feel that further theoretical study is
warranted. Indeed, this paper can be viewed as at least a partial
refutation of the claim that the MapReduce framework disallows indexed
searching, in that we show how to perform fast and efficient multi-search
in the MapReduce framework.

\subsection{The MapReduce Framework}
In the MapReduce framework, a computation is specified as a sequence of
map, shuffle, and reduce steps that operate on a set
$X=\{x_1,x_2,\ldots,x_n\}$ of values:
\begin{itemize}
\item
A {\em map step} applies a function, $\mu$, to each value, $x_i$, to
produce a finite set of key-value pairs $(k,v)$. To allow for parallel
execution, the computation of the function $\mu(x_i)$ must depend only on
$x_i$.

\item 
A {\em shuffle step} collects all the key-value pairs produced in the
previous map step, and produces a set of lists, $L_k=(k;v_1,v_2,\ldots)$,
where each such list consists of all the values, $v_j$, such that $k_j=k$
for a key $k$ assigned in the map step.

\item
A {\em reduce step} applies a function, $\rho$, to each list
$L_k=(k;v_1,v_2,\ldots)$, formed in the shuffle step, to produce a set of
values, $y_1,y_2,\ldots\,$. The reduction function, $\rho$, is allowed to
be defined sequentially on $L_k$, but should be independent of other lists
$L_{k'}$ where $k'\not=k$.
\end{itemize}

The parallelism of the MapReduce framework comes from the fact that each
map or reduce operation can be executed on a separate processor
independently of others. Thus, the user simply defines the functions $\mu$
and $\rho$, and the system automatically schedules map-shuffle-reduce steps
and routes data to available processors, including provisions for fault
tolerance.

The outputs from a reduce step can, in general, be used as inputs to
another round of map-shuffle-reduce 
steps.
Thus, a typical MapReduce computation is described as a sequence of
map-shuffle-reduce steps that perform a desired action in a series of {\em
  rounds} that produce the algorithm's output after the last reduce step.

\subsection{Evaluating MapReduce Algorithms}
Ideally, we desire the number of rounds in a MapReduce algorithm to be a
constant. For example, consider an often-cited MapReduce algorithm to count
all the instances of words in a document. Given a document, $D$, we define
the set of input values $X$ to be all the words in the document and we then
proceed as follows:
\begin{enumerate}
\item
Map: For each word, $w$, in the document, map $w$ to $(w, 1)$.
\item
Shuffle: collect all the $(w, 1)$ pairs for each word, producing a list
$(w;1,1,\ldots,1)$, noting that the number of $1$'s in each such list is
equal to the number of times $w$ appears in the document.
\item
Reduce: scan each list $(w;1,1,\ldots,1)$, summing up the number of $1$'s
in each such list, and output a pair $(w,n_w)$ as a final output value,
where $n_w$ is the number of $1$'s in the list for $w$.
\end{enumerate}
This single-round computation clearly computes the number of times
each word appears in $D$.

The number of rounds in a MapReduce algorithm is not always equal to
$1$, however, and there are, in fact, several metrics that one can use to
measure the efficiency of a MapReduce algorithm over the course of its
execution, including the following:
\begin{itemize}
\item
We can consider $R$, the {\em number of rounds} of map-shuffle-reduce that
the algorithm uses.
\item
If we let $n_{r,1},n_{r,2},\ldots$ denote the mapper and reducer I/O sizes
for round $r$, so that $n_{r,i}$ is the size of the inputs and outputs for
mapper/reducer $i$ in round $r$, then we can define $C_r$, the {\em
  communication complexity of round $r$}, to be the total size of the
inputs and outputs for all the mappers and reducers in round $r$, that is,
$C_r=\sum_i n_{r,i}$. We can also define the {\em communication
  complexity}, $C=\sum_{r=0}^{R-1}C_r$, for the entire algorithm.
\item
We can let $t_r$ denote the {\em internal running time} for round $r$,
which is the maximum internal running time taken by a mapper or reducer in
round~$r$, where we assume $t_r\ge \max_i \{n_{r,i}\}$, since a mapper or
reducer must have a running time that is at least the size of its inputs
and outputs. We can also define {\em total internal running time},
$t=\sum_{r=0}^{R-1} t_r$, for the entire algorithm, as well.
\end{itemize}
We can make a crude calibration of a MapReduce algorithm using the
following additional parameters:
\begin{itemize}
\item
$L$: the latency $L$ of the shuffle network, which is the number of steps
  that a mapper or reducer has to wait until it receives its first input in
  a given round.
\item
$B$: the bandwidth of the shuffle network, which is the number of elements
  in a MapReduce computation that can be delivered by the\ shuffle network
  in any time unit.
\end{itemize}

Given these parameters, a lower bound for the total running time, $T$, of
an implementation of a MapReduce algorithm can be characterized as
follows: $$ T = \Omega\left(\sum_{r=0}^{R-1} (t_r + L + C_r/B)\, \right) =
\Omega(t + RL + C/B).$$ For example, given a document $D$ of $n$ words, the
simple word-counting MapReduce algorithm given above has a worst-case
performance of $R = 1$, $C = \ThetaOf{n}$, and $t = \ThetaOf{n}$; hence,
its worst-case time performance $T = \ThetaOf{n}$, which is no faster than
sequential computation.  
Unfortunately, such performance could be quite
common, as the frequency of words in a natural-language document tend to
follow Zipf's law, so that some words appear quite frequently, and the
running time of the simple word-counting algorithm is proportional to the
number of occurrences of the most-frequent word. For instance, in the Brown
Corpus~\cite{kf-capda-67}, the word ``the'' accounts for 7\% of all word
occurrences.\footnote{\url{http://en.wikipedia.org/wiki/Zipf's_law}}

Note, therefore, that focusing exclusively on $R$, the number of rounds in
a MapReduce algorithm, can actually lead to an inefficient algorithm. For
example, if we focus only on the number of rounds, $R$, then the most
efficient algorithm would always be the {\em trivial one-round algorithm},
which maps all the inputs to a single key and then has the reducer for this
key perform a standard sequential algorithm to solve the problem. This
approach would run in one round, but it would not use any parallelism;
hence, it would be relatively slow compared to an algorithm that was more
``parallel.''

\subsection{Memory-Bound and I/O-Bound MapReduce Algorithms}
So as to steer algorithm designers away from the trivial one-round
algorithm, recent algorithmic formalizations of the MapReduce paradigm have
focused primarily on optimizing the round complexity bound, $R$, while
restricting the memory size or input/output size for reducers. Karloff {\it
  et al.}~\cite{ksv-amcfm-10} define their MapReduce model, MRC, so that
each reducer's I/O size is restricted to be $\OhOf{n^{1-\epsilon}}$ for some
small constant $\epsilon>0$, and Feldman {\it et
  al.}~\cite{DBLP:conf/soda/FeldmanMSSS08} define their model, MUD, so that
reducer memory size is restricted to be $\OhOf{\log^c n}$, for some constant
$c\ge 0$, and reducers are further required to process their inputs in a
single pass. These restrictions limit the feasibility of the trivial
one-round algorithm for solving a problem in the MapReduce framework and
instead compel algorithm designers to make better utilization of
parallelism.

In this paper, we follow the I/O-bound approach, as it seems to correspond
better to the way reducer computations are specified, but we take a
somewhat more general characterization than Karloff {\it et
  al.}~\cite{ksv-amcfm-10}, in that we do not bound the I/O size for
reducers explicitly to be $\OhOf{n^{1-\epsilon}}$, but instead allow it to be
an arbitrary parameter:
\begin{itemize}
\item
We define $M$ to be an upper bound on the {\em I/O-buffer memory size} for
all reducers used in a given MapReduce algorithm. That is, we predefine $M$
to be a parameter and require that $ \forall r, i:\ n_{r,i} \le M.  $
\end{itemize}
We then can use $M$ in the design and/or analysis of each of our MapReduce
algorithms. For instance, if each round of an algorithm has a reducer that
with an I/O size of at most $M$, then we say that this algorithm is an {\em
  I/O-memory-bound MapReduce algorithm} with parameter $M$. In addition, if
each round has a reducer with an I/O size proportional to $M$ (whose
processing probably dominates the reducer's internal running time), then we
can give a simplified lower bound on the time, $T$, for such an algorithm
as
\[
T = \Omega(R(M+L) + C/B).
\]

This approach therefore can characterize the limits of parallelism that are
possible in a MapReduce algorithm and it also shows that we should
concentrate on the round complexity and communication complexity of a
MapReduce algorithm in characterizing its performance\footnote{These
  measures correspond naturally with the {\em time} and {\em work} bounds
  used to characterize PRAM algorithms (e.g., see~\cite{j-ipa-92}).}.  Of
course, such bounds for $R$ and $C$ may depend on $M$, but that is fine,
for similar characterizations are common in the literature on
external-memory algorithms (e.g.,
see~\cite{av-iocsr-88,a-emaa-97,a-emds-02,
  v-emads-01,DBLP:reference/algo/Vitter08}). In the rest of the paper, when
we talk about the MapReduce model, we always mean the I/O-memory-bound
MapReduce model except when mentioned explicitly.

\subsection{Our Contributions}
We provide several efficient algorithms in the MapReduce framework for
fundamental combinatorial problems, including parallel prefix-sum,
multi-search, and sorting. All of these algorithms run in $\OhOf{\log_M N}$
map-shuffle-reduce rounds with high probability; hence, they are constant-round computations for
the case when $M$ is $\Theta(N^\epsilon)$ for some constant
$\epsilon>0$. 

Unlike the sorting algorithm in the original paper describing MapReduce
framework~\cite{dg-msdpl-08}, our sorting algorithm is truly parallel for
it does not require a central master node to compute partitioning
sequentially.

What is perhaps most unusual about the MapReduce framework is that there is
no explicit notion of ``place'' for where data is stored nor for where
computations are performed. 
This property of the MapReduce framework is perhaps what led
DeWitt and Stonebraker~\cite{ds-mmsb-08} to say that it does not
support indexed searches.
Nevertheless, we show that the MapReduce framework does in fact
support efficient multi-searching, where one is interested in
searching for a large number of keys in a search tree of roughly
equal size.

We also provide a number of simulation results. We show that any
Bulk-Synchronous Parallel (BSP) algorithm~\cite{v-bmpc-90} running in $R$
super-steps with a memory of size $N$ and $P\le N$ processors can be
simulated with a MapReduce algorithm in $R$ rounds and communication
complexity $C = \OhOf{RN}$ with reducer I/O-buffers of size $M =
\OhOf{N/P}$. We also show that any CRCW PRAM algorithm running in $T$ steps
with $P$ processors on a memory of size $N$ can be simulated in the
MapReduce framework in $R = \OhOf{T\log_M P}$ rounds with $C =
\OhOf{T(N+P)\log_M (N+P)}$ communication complexity. This latter simulation
result holds for any version of the CRCW PRAM model, including the $f$-CRCW
PRAM, which involves the computation of a commutative semigroup operator
$f$ on concurrent writes to the same memory location, such as in the
Sum-CRCW PRAM~\cite{e-oascp-07}.  The PRAM simulation results achieve their
efficiency through the use of a technique we call the {\em invisible
  funnel} method, as it can be viewed as placing virtual multi-way trees
rooted at the input items. These trees funnel concurrent read and write
requests to the data items, but are never explicitly constructed.  The
simulation results can be applied to solve several parallel computational
geometry problems, including convex hulls 
and fixed-dimensional linear programming.

\paragraph{Roadmap.} The rest of the paper is organized as follows. In
Section~\ref{sec:genericMR}, we first present our generic MapReduce
framework which simplifies the development and exposition of algorithms
that follow.  In Section~\ref{sec:simulation}, we show how to simulate BSP
and CRCW PRAM algorithms in the MapReduce framework.  Finally in
Section~\ref{sec:search-sorting}, we design MapReduce algorithms for
multi-search and sorting.

\section{Generic MapReduce Computations} \label{sec:genericMR}
In this section we define an abstract computational model that captures the
MapReduce framework.

Consider a set of nodes $V$. Let $A_v(r)$ be a set of items associated with
each node $v \in V$ at round $r$, which defines the state of $v$. Also, let
$f$ be a sequential function defined for all nodes. Function $f$ takes as input the state
$A_v(r)$ of a node $v$ and returns a new set $B_v(r)$, in the process
destroying $A_v(r)$. Each item of $B_v(r)$ is of the form $(w, a)$,
where $w \in V$ and $a$ is a new item.  We define the following computation
which proceeds in $R$ rounds.

At the beginning of the computation only the {\em input nodes} $v$ have
non-empty states $A_v(0)$. The state of an input node consists of a single
input item.

In round $r$, each node $v$ with non-empty state $A_v(r) \not = \emptyset$
performs the following. First, $v$ applies function $f$ on $A_v(r)$. This
results in the new set $B_v(r)$ and deletion of $A_v(r)$. Then, for each
element $b = (w, a) \in B_v(r)$, node $v$ sends item $a$ to node $w$. Note
that if $w = v$, then $v$ sends $a$ back to itself.  As a result of this
process, each node may receive a set of items from others. Finally, the set
of received items at each node $v$ defines the new state $A_v(r+1)$ for the
next round. The items comprising the non-empty states $A_v(r)$ after $R$
rounds define the outputs of the entire computation at which point the
computation halts.

The number of rounds $R$ denotes the {\em round complexity} of the
computation.  The total number of all the items sent (or, equivalently,
received) by the nodes in each round $r$ defines the {\em communication
  complexity} $C_r$ of round $r$, that is, $C_r = \sum_v
|B_v(r)|$. Finally, the communication complexity $C$ of the entire
computation is defined as $C = \sum_{r=0}^{R-1} C_r = \sum_{r=0}^{R-1}
\sum_v |B_v(r)|$. Note that this definition implies that nodes $v$ whose
states $A_v(r)$ are empty at the beginning of round $r$ do not contribute
to the communication complexity. Thus, the set $V$ of nodes can be
infinite. But, as long as only a finite number of nodes have non-empty
$A_v(r)$ at the beginning of each round, the communication complexity of
the computation is bounded.

Observe that during the computation, in order for node $v$ to send items to
node $w$ in round $r$, $v$ should know the label of the destination $w$,
which can be obtained by $v$ in the following possible ways (or any
combination thereof): 1) the link $(v, w)$ can be encoded in $f$ as a
function of the label of $v$ and round $r$, 2) some node might send the
label of $w$ to $v$ in the previous round, or 3) node $v$ might keep the
label of $w$ as part of its state by constantly sending it to itself.

Thus, the above computation can be viewed as a computation on a dynamic
directed graph $G = (V, E)$, where an edge $(v,w) \in E$ in round $r$
represents a possible communication link between $v$ and $w$ during that
round. The encoding of edges $(v,w)$ as part of function $f$ is equivalent
to defining an {\em implicit} graph~\cite{kannan:implicitgraphs}; keeping
all edges within a node throughout the computation is equivalent to
defining a {\em static} graph.
For ease of exposition, we define the following primitive operations that
can be used within $f$ at each node $v$:

\begin{itemize}
\item create an item; delete an item; modify an item; keep item $x$ (that
  is, the item $x$ will be sent to $v$ itself by creating an item $(v, x)
  \in B_v(r)$); send an item $x$ to node $w$ (create an item $(w,x) \in
  B_v(r)$).

\item create an edge; delete an edge. This is essentially the same as
  create an item and delete an item, since explicit edges are just
  maintained as items at nodes. This operations will simplify exposition
  when dealing with explicitly defined graphs $G$ on which computation is
  performed.
\end{itemize}

The following theorem shows that the above framework captures the essence
of computation in the MapReduce framework:

\begin{theorem} \label{thm:framework}
Let $G = (V,E)$ and $f$ be defined as above such that in each round each
node $v \in V$ sends, keeps and receives at most $M$ items. Then computation on $G$ with
round complexity $R$ and communication complexity $C$ can be simulated in
the I/O-memory-bound MapReduce model with the same round and
communication complexities.
\end{theorem}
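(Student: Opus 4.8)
The plan is to show that the generic graph-based computation maps onto the map-shuffle-reduce primitives essentially verbatim, with nodes playing the role of keys and items playing the role of key-value pairs. The correspondence is natural: a node $v$ with state $A_v(r)$ becomes a reducer keyed by $v$, the application of $f$ to $A_v(r)$ becomes the reduce function $\rho$, and the routing of each produced item $(w,a)$ to node $w$ is exactly what the shuffle step accomplishes when we emit the key-value pair $(w,a)$.

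\begin{proof}[Proof proposal]
First I would set up the simulation round-by-round, maintaining the invariant that at the start of round $r$ the multiset of key-value pairs held by the MapReduce system, grouped by key, is exactly $\{(v; A_v(r)) : A_v(r) \neq \emptyset\}$. I would describe the map step as the identity (or near-identity) map that simply tags each item already addressed to a node $v$ with key $v$, so that the shuffle step gathers, for each node $v$, precisely the set of items sent to $v$ in the previous round. The reduce function $\rho$ for key $v$ is then defined to be exactly $f$ applied to $A_v(r)$: it produces the set $B_v(r)$, and for each element $(w,a) \in B_v(r)$ it outputs the key-value pair with key $w$ and value $a$. These emitted pairs are the inputs to the next round's shuffle, which re-establishes the invariant for round $r+1$.

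Second I would verify that the resource bounds are preserved. By the theorem's hypothesis, in each round each node sends, keeps, and receives at most $M$ items; the received items of $v$ are exactly the inputs $A_v(r+1)$ to $v$'s reducer, and the sent items are exactly the outputs $B_v(r)$, so every reducer's I/O size $n_{r,i}$ is at most $M$, which is precisely the I/O-memory-bound requirement $\forall r,i:\ n_{r,i} \le M$. The round complexity is identical because each round of the graph computation is simulated by exactly one map-shuffle-reduce round. The communication complexity matches as well, since the total number of key-value pairs shuffled in round $r$ equals $\sum_v |B_v(r)| = C_r$, and summing over rounds gives $C$.

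The one point requiring care—and what I expect to be the main obstacle—is handling the three mechanisms by which a node learns the label of its destination $w$ (implicit encoding in $f$, receipt of the label in a prior round, or self-retention), and ensuring that the reducer for key $v$ can actually compute these destination keys locally from its state. For the implicit-graph case I would note that $\rho$, being an arbitrary sequential function, can recompute the edge $(v,w)$ from the label $v$ and the round number $r$, which can be threaded through as part of the state; for the other two cases the labels are carried as ordinary items within $A_v(r)$ and are thus available to $\rho$ by construction. I would also address the boundary conventions: the input nodes' singleton states $A_v(0)$ become the initial key-value pairs, and after $R$ rounds the non-empty states are read off as the final reducer inputs, matching the definition of the computation's output. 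With these identifications in place, the simulation is faithful and the stated round and communication complexities carry over unchanged.
\end{proof}
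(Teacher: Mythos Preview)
Your proposal is correct and follows essentially the same approach as the paper: identify nodes with reducer keys, let the reduce function compute $f$, and let the shuffle route each produced item $(w,a)$ to the reducer for $w$, then verify the $M$-bound on reducer I/O from the send/keep/receive hypothesis. The only cosmetic difference is that the paper handles round $0$ by letting the \emph{Map} function compute $f$ on the singleton inputs $A_v(0)$ (and uses the identity Map thereafter, with $f$ in the Reduce of the previous MapReduce round), whereas you uniformly place $f$ in Reduce and treat the initial singletons as already-keyed pairs; both alignments yield $R$ MapReduce rounds and the same communication $C$.
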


\begin{proof}
We implement round $r = 0$ of computation on $G$ in the I/O-memory-bound
MapReduce framework using only the Map and Shuffle steps and every round
$r>0$ using the Reduce step of round $r-1$ and a Map and Shuffle step of
round $r$.
\begin{enumerate}
\item Round $r = 0$: (a) Computing $B_v(r) = f(A_v(r))$: Initially, only
  the input nodes have non-empty sets $A_v(r)$, each of which contains only
  a single item.  Thus, the output $B_v(r)$ only depends on a single item,
  fulfilling the requirement of Map. We define Map to be the same as $f$, i.e.,
  it outputs a set of key-value tuples $(w, x)$, each of which corresponds to
  an item $(w, x)$ in $B_v(r)$.  (b) Sending items to destinations: The
  Shuffle step on the output of the Map step ensures that all tuples with
  key $w$ will be sent to the same reducer, which corresponds to the node
  $w$ in $G$.

\item Round $r > 0$: First, each reducer $v$ that receives a tuple $(v;
  x_1, x_2, \dots, x_k)$ (as a result of the Shuffle step of the previous
  round) simulates the computation at node $v$ in $G$. That is, it
  simulates the function $f$ and outputs a set of tuples $(w, x)$, each
  of which corresponds to an item in $B_v(r)$. We then define Map to be the
  identity map: On input $(w, x)$, output key-value pair $(w, x)$.
  Finally, the Shuffle step of round $r$ completes the simulation of the
  round $r$ of computation on graph $G$ by sending all tuples with key $w$
  to the same reducer that will simulate node $w$ in $G$ in round $r+1$.
\end{enumerate}
Keeping an item is equivalent to sending it to itself, thus, each node in
$G$ sends and receives at most $M$ items. Therefore, no reducer
receives or generates more than $M$ items implying that the above is
a correct I/O-memory-bound MapReduce algorithm.
\end{proof}

The above theorem gives an abstract way of designing MapReduce
algorithms. More precisely, to design a MapReduce algorithm, we define
graph $G$ and a sequential function $f$ to be performed at each node $v \in
V$. This is akin to designing BSP algorithms and is more intuitive way than
defining Map and Reduce functions. 

Note that in the above framework we can easily implement a global loop
primitive spanning over multiple rounds: Each item maintains a counter that
is updated at each round. We can also implement {\em parallel tail recursion} by
defining the labels of nodes to include the recursive call stack
identifiers.

Next, we show how we can implement an all-prefix-sum algorithm in the
generic MapReduce model. This algorithm will then be used as a subroutine
in a random indexing algorithm, which in turn will be used in the
multi-search algorithm in Section~\ref{sec:search}.

\subsection{Prefix Sums and Random Indexing} \label{sec:prefix}
The all-prefix-sum problem is usually defined on an array of
integers. Since there is no notion of arrays in the MapReduce framework,
but rather a collection of items, we define the all-prefix-sum problem as
follows: given a collection of items $x_i$, where $x_i$ holds an integer
$a_i$ and an index value $0 \le i \le N-1$, compute for each item $x_i$ a
new value $b_i = \sum_{j=0}^i a_j$.

The MapReduce algorithm for all-prefix-sum problem is the following. Graph
$G = (V,E)$ is an undirected\footnote{Each undirected edge is represented
  by two directed edges in $G$.}  rooted tree $\cT$ with branching factor
$d = M/2$ and height $L = \ceil{\log_d N} = \OhOf{\log_M N}$. The
root of the tree is defined to be at level $0$ and leaves at level
$L-1$. We label the nodes in $\cT$ such that the $k$-th node (counting from
the left) on level $l$ is defined as $v = (l, k)$. Then, we can identify
the parent of a non-root node $v = (l, k)$ as $p(v) = (l-1, \floor{k/d})$
and the $j$-th child of $v$ as $w_j = (l+1, k\cdot d+j)$.  In other words,
the neighborhood set of any node $v \in \cT$ can be computed solely from
the label of $v$, thus, we do not have to maintain edges explicitly.

In the initialization step, each input node simply sends its input item
$a_i$ with index $i$ to the leaf node $v = (L-1, i)$. The rest of the
algorithm proceeds in two phases, processing the nodes in $\cT$ one
level at a time. The nodes at other levels simply keep the items they have
received during previous rounds.

\begin{enumerate}

\item{Bottom-up phase.} For $l = L-1$ downto $1$ do: For each node $v$ on
  level $l$ do: If $v$ is a leaf node, it received a single value $a_i$
  from an input node.  The function $f$ at $v$ creates a copy $s_v = a_i$,
  keeps $a_i$ it had received and sends $s_v$ to the parent $p(v)$ of
  $v$. If $v$ is a non-leaf node, let $w_0, w_1, \dots, w_{d-1}$ denote
  $v$'s child nodes in the left-to-right order. Node $v$ received a set of
  $d$ items $A_v(r) = \{s_{w_0}, s_{w_1}, \dots, s_{w_{d-1}}\}$ from its
  children at the end of the previous round.  $f(A_v(r))$ computes the sum
  $s_v = \sum_{j=0}^{d-1} s_{w_j}$, sends $s_v$ to $p(v)$ and keeps all the
  items received by the children. 

\item{Top-down phase.} For $l = 0$ to $L-1$ do: For each node $v$ on level
  $l$ do: If $v$ is the root, it had received items $A_v(r) = \{s_{w_0},
  s_{w_1}, \dots, s_{w_{d-1}}\}$ at the end of the bottom-up phase. It
  creates for each child $w_i\ (0 \le i \le d-1)$ a new item $s'_i =
  \sum_{j=0}^{i-1} s_{w_j}$ and sends it to $w_i$. If $v$ is a non-root
  node, let $s_{p(v)}$ be the item received from its parent in the previous
  round. Inductively, the value $s_{p(v)}$ is the sum of all items ``to the
  left" of $v$. If $v$ is a leaf having a unique item $a_k$, then it simply
  outputs $a_k + s_{p(v)}$ as a final value, which is the prefix sum
  $\sum_{j=0}^k a_j$. Otherwise, it creates for each child $w_i\ (0 \le i
  \le d-1)$ a new item $s_{p(v)} + \sum_{j=0}^{i-1} s_{w_j}$ and sends it
  to $w_i$. In all cases, all items of $v$ are deleted.
\end{enumerate}

\begin{lemma}
\label{lem:prefix}
Given an index collection of $N$ numbers, we can compute all prefix sums in the
I/O-memory-bound MapReduce framework in $\OhOf{\log_M N}$ rounds and
$\OhOf{N\log_M N}$ words of communication.
\end{lemma}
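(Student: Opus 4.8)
The plan is to verify that the two-phase tree algorithm already described is correct, meets the stated round and communication bounds, and respects the per-node $M$-bound, so that Theorem~\ref{thm:framework} can be invoked to turn the computation on $\cT$ into a MapReduce algorithm with identical complexities.

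First I would establish correctness by induction on the levels of $\cT$. For the bottom-up phase I would show that for every node $v$ the value $s_v$ forwarded to its parent equals the sum of the inputs $a_i$ stored at the leaves of the subtree rooted at $v$: the base case is a leaf (where $s_v = a_i$), and the inductive step is exactly the computed sum $s_v = \sum_{j=0}^{d-1} s_{w_j}$ over the $d$ children. For the top-down phase I would prove, by induction from the root downward, that the item a node $v$ receives from its parent equals the sum of all $a_j$ lying strictly to the left of $v$'s subtree in left-to-right leaf order; combining this with the retained child-subtree sums $s_{w_j}$ gives each child the sum of everything to its left. Evaluating at a leaf holding $a_k$ yields the output $a_k + s_{p(v)} = \sum_{j=0}^{k} a_j$, the desired prefix sum. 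The round count is then immediate: one initialization round routes each $a_i$ to its leaf, the bottom-up phase uses one round per level $L-1$ down to $1$, and the top-down phase uses one round per level $0$ to $L-1$; since $L = \ceil{\log_d N} = \OhOf{\log_M N}$ with $d = M/2$, the total is $\OhOf{\log_M N}$ rounds.

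The delicate step, and the one I would treat most carefully, is the communication bound, because every node holding items re-sends (keeps) them to itself on every round until its level is processed, so I must bound the total number of live items per round rather than only the items that visibly move. At any round the live items are the $N$ leaf values $a_i$ (kept through both phases) together with, for each already-processed internal node, the $d$ child-sums $s_{w_j}$ it retains for the top-down phase. Since the $s_{w_j}$ values are in bijection with the non-root nodes of $\cT$, and a $d$-ary tree with $d \ge 2$ and $N$ leaves has $\OhOf{N}$ nodes in total, at most $\OhOf{N}$ items are live in any round; thus $C_r = \OhOf{N}$ and $C = \OhOf{N \log_M N}$ summed over all rounds.

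In the same analysis I would verify the per-node bound demanded by Theorem~\ref{thm:framework}. A leaf ever holds, sends, or receives at most $2$ items. An internal node being processed in the bottom-up phase receives its $d = M/2$ child values and emits at most $d+1 \le M$ items (one to the parent, $d$ kept); in the top-down phase it receives its $d$ kept items plus one parent value ($d+1 \le M$) and sends $d \le M$ items to its children; and while merely keeping, it sends and receives exactly $d = M/2$ items. The choice $d = M/2$ is precisely what keeps all of these quantities within $M$. With the round bound, the communication bound, and the per-node $M$-bound all in hand, Theorem~\ref{thm:framework} converts the computation on $\cT$ into an I/O-memory-bound MapReduce algorithm running in $\OhOf{\log_M N}$ rounds with $\OhOf{N \log_M N}$ words of communication, as claimed.
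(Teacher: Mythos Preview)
Your proposal is correct and follows essentially the same approach as the paper: correctness by induction on the values propagated down the tree, a round count of $2L+\OhOf{1}=\OhOf{\log_M N}$, an $\OhOf{N}$-per-round communication bound, verification of the per-node $M$-bound, and a final appeal to Theorem~\ref{thm:framework}. Your treatment is in fact more careful than the paper's one-paragraph proof---in particular, you explicitly account for the items each node re-sends to itself while waiting for its level to be processed (the paper simply asserts the per-round traffic is ``dominated by items sent by $N$ leaves''), and you spell out why the choice $d=M/2$ keeps every send/receive/keep count within $M$.
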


\begin{proof}
The fact that the algorithm correctly computes all prefix sums is by
induction on the values $s_{p(v)}$. In each round, each node sends and
receives at most $M$ items, fulfilling the condition of
Theorem~\ref{thm:framework}.
The total number of
rounds is $2L = \OhOf{\log_M N}$ plus the initial round of sending input
elements to the leaves of $\cT$. The total number of items sent in each
round is dominated by items sent by $N$ leaves, which is $\OhOf{N}$ per
round. Applying Theorem~\ref{thm:framework} completes the proof.
\end{proof}

Quite often, the input to the MapReduce computation is a collection of
items with no particular ordering or indexing. If each input element is
annotated with an estimate $N \le \hat N \le N^c$ of the size of
the input, for some constants $c \ge 1$, then using the all-prefix-sum
algorithm we can generate a random indexing for the input with high
probability.

We modify the all-prefix-sum algorithm above as follows. We define the tree
$\cT$ on $\hat N^3$ leaves, thus, the height of the tree is $L =
\ceil{3\log_d \hat N}$. In the initialization step, each input node picks a
random index $i$ in the range $[0, \hat N^3 - 1]$ and sends $a_i = 1$ to
the leaf node $v = (L-1, i)$ of $\cT$. Let $n_v$ be the number of items
that leaf $v$ receives. Note it is possible that $n_v > 1$, thus, we perform
the all-prefix-sums computation with the following differences at the leaf
nodes.  During the bottom-up phase, we define $s_v = n_v$ at the leaf node
$v$.  At the end of the top-down phase, each leaf $v$ assigns each of the
item that it received from the input nodes the indices $s_{p(v)}+1,
s_{p(v)}+2, \ldots, s_{p(v)}+n_v$ in a random order, which is the final
output of the computation.
  
\begin{lemma} \label{lem:index}
 A random indexing of the input can be performed on a collection of data in
 the I/O-memory-bound MapReduce framework in $\OhOf{\log_M N}$ rounds and
 $\OhOf{N \log_M N}$ words of communication with high probability.
\end{lemma}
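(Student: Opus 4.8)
The plan is to treat this as a corollary of the all-prefix-sum construction: the random-indexing procedure is exactly the prefix-sum algorithm of Lemma~\ref{lem:prefix} run on the tree $\cT$ with $\hat N^3$ leaves and leaf weights $n_v$, so most of the work is inherited from there. First I would observe that the index assignment is correct \emph{deterministically}: the top-down phase hands leaf $v$ the value $s_{p(v)}=\sum_{w<v} n_w$ (the number of input items stored in leaves strictly to the left of $v$), and $v$ then labels its $n_v$ items with the consecutive integers $s_{p(v)}+1,\dots,s_{p(v)}+n_v$. Since the half-open intervals $(s_{p(v)},\,s_{p(v)}+n_v]$ are disjoint and tile $\{1,\dots,\sum_v n_v\}=\{1,\dots,N\}$, every item receives a distinct label and the labels are exactly $1,\dots,N$; the randomness of the indexing comes solely from the uniformly random leaf chosen by each input node (and the random tie-break inside a leaf). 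Thus the only things left to establish are the round bound, the communication bound, and the high-probability qualifier.

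For the round complexity I would simply bound the height. Since $d=M/2$ and $L=\ceil{3\log_d \hat N}$, and since the supplied estimate satisfies $N\le \hat N\le N^c$ for a constant $c$, we get $L=\OhOf{\log_d \hat N}=\OhOf{\log_M N}$. The algorithm performs one initialization round, a bottom-up sweep, and a top-down sweep, for a total of $2L+1=\OhOf{\log_M N}$ rounds, matching Lemma~\ref{lem:prefix}.

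The heart of the argument is the high-probability clause together with the communication bound, and this is where I expect the real work to be. For the probability I would model the initialization as throwing $N$ balls (the input items) into $\hat N^3\ge N^3$ bins (the leaves) uniformly at random; a union bound over the $\binom{N}{2}$ pairs shows that the probability that two items share a leaf is at most $\binom{N}{2}/\hat N^3\le 1/(2N)$, so with high probability every leaf receives at most one item, hence $n_v\le 1\le M$. Combined with the deterministic facts that every internal node has at most $d=M/2$ children (so it receives at most $M$ partial sums) and that ``keep'' is a send-to-self, this verifies the hypothesis of Theorem~\ref{thm:framework} that each node sends, keeps, and receives at most $M$ items, so the generic computation may be realized in the I/O-memory-bound MapReduce model. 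For the communication I would follow the accounting of Lemma~\ref{lem:prefix}: in each round the items in transit should be dominated by the $\OhOf{N}$ items emitted from the at most $N$ non-empty leaves, giving $C_r=\OhOf{N}$ and hence $C=\OhOf{N\log_M N}$ over all rounds. The step I expect to be the main obstacle is precisely making this domination rigorous, because $\cT$ has $\hat N^3$ leaves and we must not pay for the empty ones: here I would argue that only nodes with a non-empty descendant ever hold an item, that the number of such nodes on level $l$ is at most $\min\{N,d^l\}$ (so that $\cT$ has only $\OhOf{N\log_M N}$ active nodes in total, the high levels contributing $\approx N$ each and the low levels a geometric sum $\OhOf{N}$), and that the per-round bookkeeping of these internal nodes does not asymptotically exceed the leaf contribution. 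Feeding the resulting $\OhOf{\log_M N}$ round and $\OhOf{N\log_M N}$ communication bounds into Theorem~\ref{thm:framework} yields the claim, with the $\OhOf{1/N}$ failure probability absorbed into the ``with high probability'' guarantee.
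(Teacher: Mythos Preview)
Your proposal is correct and follows the same skeleton as the paper: inherit round and communication bounds from Lemma~\ref{lem:prefix}, observe that at most $N$ leaves are ever non-empty so the per-round traffic is $\OhOf{N}$, and handle the I/O-buffer constraint probabilistically. The one genuine difference is in the probability step. You argue via a birthday bound that with probability at least $1-1/(2N)$ no two items land in the same leaf, so $n_v\le 1$ everywhere; this is enough for the ``with high probability'' clause as stated. The paper instead bounds the probability that \emph{some} leaf receives more than $M$ items, obtaining the much stronger failure probability $N^{-\Omega(M)}$. Your argument is more elementary and actually establishes a stronger structural event (no collisions at all), but the paper's weaker event yields a far smaller failure probability, and that specific $N^{-\Omega(M)}$ bound is later exploited in Section~\ref{sec:queue} to separate the random-indexing failure from the $N^{-\Omega(1)}$ failure of the BSP multi-search simulation. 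So both routes prove the lemma; the paper's buys a quantitatively better tail that it cashes in downstream.
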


\begin{proof}
First, note that the probability that $n_v > M$ at some leaf vertex is at
most $N^{-\Omega(M)}$.  Thus, with probability at least $1 -
N^{-\Omega(M)}$, no leaf and, consequently, no node of $\cT$ receives more
than $\OhOf{M}$ elements. Second, note that at most $N$ leaves of the tree
$\cT$ have $A_v(r) \not = \emptyset$.  Since we do not maintain the edges
of the tree explicitly, the total number of items sent in each round is
again dominated by the items sent by at most $N$ leaves, which is
$\OhOf{N}$ per round. Finally, the round and communication complexity
follows from Lemma~\ref{lem:prefix}.
\end{proof}

\section{Simulating BSP and CRCW PRAM Algorithms}
\label{sec:simulation}

In this section we show how to simulate BSP and CRCW PRAM algorithms in the
MapReduce framework.
Our methods therefore provide extensions of the simulation result
of Karloff {\it et al.}~\cite{ksv-amcfm-10}, who show how to optimally
simulate any EREW PRAM algorithm in the MapReduce
framework.\footnote{Their original proof
    was identified for the CREW PRAM model, but there was a flaw in
    that version, which could violate the I/O-buffer-memory size
    constraint during a CREW PRAM simulation. 
    Based on a personal
    communication, we have learned that the subsequent version of
    their paper will identify their proof as being for the EREW PRAM.}

\subsection{Simulating BSP algorithms}
In the BSP model~\cite{v-bmpc-90}, the input of size $N$ is distributed
among $P$ processors so that each processor contains at most $M = \lceil
N/P\rceil$ input items. A computation is specified as a series of
super-steps, each of which involves each processor performing an internal
computation and then sending a set of up to $M$ messages to other
processors.

The initial state of the BSP algorithm is an indexed set of processors
$\{p_1, p_2, \ldots, p_P\}$ and an indexed set of initialized memory cells
$\{m_{1,1}, m_{1,2}, \ldots, m_{p,m}\}$, such that $m_{i,j}$ is the $j$-th
memory cell assigned to processor $i$. Since our framework is almost
equivalent to the BSP model, the simulation is straightforward:
\begin{itemize}
\item Each processor $p_i\ (1 \le i \le P)$ defines a node $v_i$ in our
  generic MapReduce graph $G$, and the internal state $\pi_i$ of $p_i$ and
  its memory cells $\{m_{i,1}, \ldots, m_{i,m}\}$ define the items
  $A_{v_i}$ of node $v_i$.  In the BSP algorithm, in each super-step each
  processor $p_i$ performs a series of computation, updates its internal
  state and memory cells to $\pi'_i$ and $\{m'_{i,1}, \ldots, m'_{i,m}\}$,
  and sends a set of messages $\mu_{j_1}, \ldots, \mu_{j_k}$ to processors
  $p_{j_1}, \ldots, p_{j_k}$, where the total size of all messages sent or
  received by a processor is at most $M$.  In our MapReduce
  simulation, function $f$ at node $v_i$ performs the same computation,
  modifies items $\{\pi_i, m_{i,1}, \ldots, m_{i,m}\}$ to $\{\pi'_i,
  m'_{i,1}, \ldots, m'_{i,m}\}$ and sends items $\mu_{j_1}, \ldots,
  \mu_{j_k}$ to nodes $v_{j_1}, \ldots, v_{j_k}$.
\end{itemize}

\begin{theorem}
\label{thm:bsp}
Given a BSP algorithm $\cal A$ that runs in $R$ super-steps with a total
memory size $N$ using $P \le N$ processors, we can simulate $\cal A$ using
$\OhOf{R}$ rounds and $C = \OhOf{RN}$ communication in the
I/O-memory-bound MapReduce framework with reducer memory size bounded by
$M = \lceil N/P\rceil$.
\end{theorem}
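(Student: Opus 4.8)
The plan is to confirm that the node-for-node translation set up immediately before the statement satisfies the hypotheses of Theorem~\ref{thm:framework}---a per-round, per-node I/O bound of $\OhOf{M}$---and then to read off the round and communication counts, establishing correctness by induction on super-steps. Because the correspondence $p_i \mapsto v_i$ is a bijection and $f$ at $v_i$ is defined to replay exactly the internal computation of $p_i$, correctness should follow once the bookkeeping is in place. The inductive invariant I would maintain is that the item set $A_{v_i}(r)$ encodes precisely the internal state $\pi_i$ and the memory cells $\{m_{i,1},\dots,m_{i,m}\}$ of processor $p_i$ at the start of super-step $r$; applying $f$ reproduces the processor's state/memory update together with its outgoing messages, and the shuffle routing of the generic framework delivers each message to the node representing its destination processor, re-establishing the invariant at round $r+1$.

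The core of the argument is the I/O accounting at each node. First I would note that the number of cells per processor is $m = \lceil N/P\rceil = M$, so each node permanently holds $M$ memory items. Since the generic framework provides no implicit persistent storage, a node must \emph{keep} all of its memory cells every round, i.e.\ re-send them to itself; on top of that, the BSP constraint caps the total size of messages a processor sends or receives per super-step at $M$. Hence in each round node $v_i$ both emits (into $B_{v_i}$) and receives (as $A_{v_i}$) at most $M$ kept cells plus at most $M$ messages, for a total of $\OhOf{M}$ items---within a constant factor of the stated buffer size $M = \lceil N/P\rceil$. Theorem~\ref{thm:framework} then yields a valid I/O-memory-bound MapReduce algorithm using one round per super-step (plus at most one initial round to load the distributed input into the nodes), giving the $\OhOf{R}$ round bound.

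For the communication complexity I would sum $|B_{v_i}(r)|$ over the at most $P$ active nodes in a fixed round: the kept memory cells contribute the total memory $N$, and the messages contribute at most $P\cdot M = P\lceil N/P\rceil = \OhOf{N}$ using $P \le N$. Thus $C_r = \OhOf{N}$ for each of the $\OhOf{R}$ rounds, giving $C = \sum_r C_r = \OhOf{RN}$ as claimed.

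The step I expect to demand the most care is precisely this I/O accounting, and in particular the observation that memory persistence is \emph{not} free in this model: keeping an item is implemented as a self-message, so each of the $N$ memory cells is recommunicated in every round. This is exactly what forces the $\OhOf{RN}$ communication bound---rather than a bound depending only on the volume of genuine inter-processor messages---and it is what must be verified to keep each node within the $\Theta(M)$ per-round budget that Theorem~\ref{thm:framework} requires.
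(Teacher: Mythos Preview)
Your proposal is correct and follows the same approach as the paper: the paper's proof is simply the simulation description preceding the theorem, which implicitly invokes Theorem~\ref{thm:framework}, and you have made that invocation and the per-node I/O accounting explicit. Your emphasis on the fact that kept memory cells must be self-sent (so persistence is not free and drives the $\OhOf{RN}$ communication bound) is exactly the point that justifies both the $\OhOf{M}$ per-node bound and the per-round $\OhOf{N}$ communication; the paper leaves this implicit.
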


\paragraph{Applications.} 
By Theorem~\ref{thm:bsp}, we can directly simulate BSP algorithms for
sorting~\cite{g-ceps-99} and convex hulls~\cite{g-rfsbt-97},
achieving, for each problem, $\OhOf{\log_M N}$ rounds and $\OhOf{N\log_M N}$
communication complexity.

In Section~\ref{sec:sorting} we will present a randomized sorting
algorithm, which has the same complexity but is simpler than directly
simulating the complicated BSP algorithm in ~\cite{g-ceps-99}.

\subsection{Simulating CRCW PRAM algorithms}
In this section we present a simulation of any $f$-CRCW PRAM model, the
strongest variant of the PRAM model, where concurrent writes to the same
memory location are resolved by applying a commutative semigroup operator
$f$ on all values being written to the same memory address, such as {\em
  Sum}, {\em Min}, {\em Max}, etc.

The input to our simulation of a PRAM algorithm $\cal A$ assumes that the
input is specified by an indexed set of $P$ processor items, $p_1, \ldots,
p_P$, as well as an indexed set of initialized PRAM memory cells, $m_1,
\ldots, m_N$, where $N$ is the total memory size used by $\cal A$.

The main challenge in simulating the algorithm $\cal A$ in the MapReduce
model is that there may be as many as $P$ reads and writes to the same
memory cell in any given step and $P$ can be significantly larger than $M$,
the memory size of reducers. Thus, we need to have a way to ``fan in''
these reads and writes. We accomplish this by using {\em invisible funnel}
technique, where we imagine that there is a different implicit
$\OhOf{M}$-ary tree rooted at each memory cell that has the set of
processors as its leaves.  Intuitively, our simulation algorithm involves
routing reads and writes up and down these $N$ trees. We view them as
``invisible'', because we do not actually maintain them explicitly, since
that would require $\Theta(PN)$ additional memory cells.

The invisible funnels constructed here are similar to the one constructed
for computing random indexing in Section~\ref{sec:prefix}, each of which is
a multi-way tree with fan-out parameter $d= M/2$ and height $L =
\lceil \log_d P \rceil = \OhOf{\log_M P}$. Recall that according to our
labeling scheme, given a node $v = (j, l, k)$, the $k$-{th} node on level
$l$ of the $j$-th tree, we can uniquely identify the label of its parent
$p(v)$ and each of its $d$ children.

We view the computation specified in a single step in the algorithm $\cal
A$ as being composed of a read sub-step, followed by a constant-time
internal computation, followed by a write sub-step. At the initialization
step, we send $m_j$ to the root node of the $j$-{th} tree, i.e., $m_j$ is
sent to node $(j, root) = (j, (0,0))$. For each processor $p_i\ (1 \le i
\le P)$, we send items $p_i$ and $\pi_i$ to node $u_i$, where $\pi_i$
is the internal state of processor $p_i$. Again, throughout the algorithm,
each node keeps the items that it has received in previous rounds until
they are explicitly deleted.

\begin{enumerate}
\item{Bottom-up read phase.} For each processor $p_i$ that attempts to
  read memory location $m_j$, node $u_i$ sends an item encoding a read
  request (in the following we simply say a read request)
  to the $i$-{th} leaf node of the $j$-{th} tree, i.e. to node $(j, L-1,
  i)$, indicating that it would like to read the contents of the $j$-th
  memory cell.

  For $l = L-1$ downto $1$ do:
  \begin{itemize}
  \item For each node $v$ at level $l$, if it received read request(s) in
    the previous round, then it sends a read request to its parent $p(v)$.
  \end{itemize}

\item{Top-down read phase.} The root node in the $j$-{th} tree sends
  the value $m_j$ to child $(j, w_k)$ if child $w_k$ has sent a read
  request at the end of the bottom-up read phase.

  For $l = 1$ to $L-2$ do:
  \begin{itemize}
  \item For each node $v$ at level $l$, if it received $m_j$ from its
    parent in the previous round, then it sends $m_j$ to all those children
    who have sent $v$ read requests during the bottom-up read phase. After
    that $v$ deletes all of its items.
  \end{itemize}
  For each leaf $v$, it sends $m_j$ to the node $u_i\ (1 \le i \le P)$ if
  $u_i$ has sent $v$ a read request at the beginning of the bottom-up read
  phase. After that $v$ deletes all of its items.

\item {Internal computation phase.} At the end of the top-down phase,
  each node $u_i$ receives its requested memory item $m_j$, it performs the
  internal computation, and then sends an item $z$ encoding a write request
  to the node $(j, L-1, i)$ if processor $p_i$ wants to write $z$ to the
  memory cell $m_j$.

\item {Bottom-up write phase.} For $l = L-1$ downto $0$ do:
  \begin{itemize}
    \item For each node $v$ at level $l$, if it received write request(s)
      in the previous round, let $z_1, \ldots, z_k\ (k \le d)$ be the items
      encoding those write requests. If $v$ is not a root, it applies the
      semigroup function on input $z_1, \ldots, z_k$, sends the result $z'$
      to its parent, and then deletes all of its
      items. Otherwise, if $v$ is a root, it modifies its current memory
      item to $z'$. 
  \end{itemize}
\end{enumerate}

When we have completed the bottom-up write phase, we are inductively ready
for simulating the next step in the PRAM algorithm. We have the following.

\begin{theorem}
\label{thm:pram}
Given an algorithm $\cal A$ in the CRCW PRAM model, with write conflicts
resolved according to a commutative semigroup function such that
$\cal A$ runs in $T$ steps using $P$ processors and $N$ memory cells, we
can simulate $\cal A$ in the I/O-memory-bound MapReduce framework in $R =
\OhOf{T\log_M P}$ rounds and with $C = \OhOf{T(N+P)\log_M (N+P)}$
communication complexity.
\end{theorem}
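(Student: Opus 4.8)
The plan is to analyze the invisible-funnel simulation already described, establishing correctness, the per-node $M$-bound, and then the round and communication costs; the real work is the communication analysis. I would first argue correctness by induction on the number of simulated PRAM steps. Assuming that at the start of step $t$ each tree root $(j,(0,0))$ holds the correct value of memory cell $m_j$, the bottom-up read phase funnels the (at most $P$) read requests toward the roots, and the top-down read phase broadcasts each $m_j$ back down exactly to those leaves whose corresponding processors requested it, so every reading processor node $u_i$ receives the correct current contents of its target cell. After the constant-time internal computation, the bottom-up write phase aggregates the write requests destined for $m_j$ using the commutative semigroup operator; since that operator is associative and commutative, the hierarchical fan-in at each internal node produces the same result as applying it to all concurrent writes at once, and the root installs this value as the new $m_j$. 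Hence the memory contents are correct at the start of step $t+1$, closing the induction.

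Next I would check that the construction meets the hypothesis of Theorem~\ref{thm:framework}, namely that in each round every node sends, keeps, and receives at most $M$ items. Because each tree has fan-out $d = M/2$, an internal node receives at most $d$ read or write requests from its children, keeps at most $d$ of them (to record which children to answer in the top-down phase), and sends either one aggregated item to its parent or at most $d$ copies of $m_j$ to its children; in every case the total is at most $M$. Each root additionally keeps its single cell $m_j$, and each processor node keeps its $\OhOf{1}$ state items, so the bound holds everywhere. Invoking Theorem~\ref{thm:framework}, the whole computation on $G$ is realizable in the I/O-memory-bound MapReduce model with the same round and communication complexities, so it suffices to bound these for the generic computation.

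For the round complexity, each of the four phases of a single PRAM step walks the trees one level at a time, and the trees have height $L = \ceil{\log_d P} = \OhOf{\log_M P}$; adding the $\OhOf{1}$ rounds that inject requests at the leaves and return values to the processors, each PRAM step costs $\OhOf{L} = \OhOf{\log_M P}$ rounds, for a total of $R = \OhOf{T\log_M P}$.

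The communication analysis is the crux, and I would bound $C_r$ uniformly. Two kinds of items are sent in any round. The persistent items are the $N$ memory cells kept at the roots and the $\OhOf{P}$ processor-state items kept at the $u_i$; since keeping an item counts as sending it to oneself, these contribute $\OhOf{N+P}$ per round. The transient items are the requests and values moving through the trees; here the key funnel property is that the active nodes at any fixed level inject into pairwise-disjoint sets of active leaves, and there are at most $P$ active leaves in total across all $N$ trees (one per processor per step), so at most $P$ nodes are active at each level and they forward $\OhOf{P}$ items in that round (in the top-down read phase each forwarded copy is charged to a distinct active child, again at most $P$ of them). Thus $C_r = \OhOf{N+P}$ in every round, and summing over the $R = \OhOf{T\log_M P}$ rounds gives $C = \OhOf{T(N+P)\log_M P} = \OhOf{T(N+P)\log_M(N+P)}$, as claimed. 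The one subtlety to get right is this separation between the \emph{keep} cost, which is what forces the factor $N$, and the genuine routing cost, which depends only on $P$; conflating them would wrongly suggest a factor of $N$ in the per-level traffic.
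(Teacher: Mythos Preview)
Your proposal is correct and follows exactly the paper's approach: analyze the invisible-funnel simulation, verify the per-node $M$-bound so that Theorem~\ref{thm:framework} applies, and then read off the round and communication costs. Your write-up is in fact considerably more detailed than the paper's own two-line proof, which merely asserts that each PRAM step costs $\OhOf{\log_M P}$ rounds and that $\OhOf{N}$ items are sent per round; your separation of the persistent $\OhOf{N+P}$ keep cost from the $\OhOf{P}$ per-level routing cost makes explicit the reasoning the paper leaves implicit.
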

\begin{proof}
Each round in the CRCW PRAM
algorithm is simulated by $\OhOf{\log_M P}$ rounds in the I/O-memory-bound
MapReduce algorithm, and the total number of items sent is $\OhOf{N}$ per round.
\end{proof}

\paragraph{Applications.} 
By Theorem~\ref{thm:pram}, we can directly simulate any CRCW (thus, also
CREW) PRAM algorithm, in particular, linear
programming in fixed dimensions by Alon and Megiddo~\cite{am-plpfd-94}. The
simulation achieves $\OhOf{\log_M N}$ rounds and $\OhOf{N\log_M N}$ communication
complexity.

\section{Multi-searching and Sorting}
\label{sec:search-sorting}
In this section, we present a method for performing simultaneous searches
on a balanced search tree data
structure.  Let $\cT$ be a balanced binary search tree and $Q$ be a set of
queries. Let $N = |\cT| + |Q|$.  The problem of multi-search asks to
annotate each query $q \in Q$ with a leaf $v \in \cT$, such that the
root-to-leaf search path for $q$ in $\cT$ terminates at $v$.

Goodrich~\cite{g-rfsbt-97} provides a solution to the multi-search problem
in the BSP model. However, directly simulating the BSP algorithm in the
I/O-memory-bound MapReduce model has two issues.

First, the model used in~\cite{g-rfsbt-97} is a non-standard BSP model for
it allows a processor to keep an {\em unlimited} number of items between
rounds while still requiring each processor to send and receive at most
$\lceil N/P \rceil = M$ items.
However, a closer inspection of~\cite{g-rfsbt-97} reveals that
the probability that some processor will contain more than $M$ items
in some round is at most $N^{-c}$ for any constant $c \ge 1$.
Therefore, with high probability it can still be simulated in our MapReduce
framework.

Second, the BSP solution requires $\OhOf{N\log_M N}$ space.  Thus,
Theorem~\ref{thm:bsp} provides us with a MapReduce algorithm with
communication complexity $\OhOf{N \log_M^2 N}$.
In this section we improve this communication complexity to $\OhOf{N \log_M
  N}$, while still achieving $\OhOf{\log_M N}$ round complexity with high
probability.

In section~\ref{sec:queue} we also describe a queuing strategy which
reduces the probability of failure due to the first issue of the
simulation. The queuing algorithm might also be of independent interest
because it removes some of the requirements of the framework of
Section~\ref{sec:genericMR}.

\subsection{Multi-searching} \label{sec:search}
As mentioned before, with high probability we can simulate the BSP
algorithm of Goodrich~\cite{g-rfsbt-97} in MapReduce model in $R =
\OhOf{\log_M N}$ rounds and $C = \OhOf{N \log^2_M N}$ communication
complexity.  In this section we present a solution to reduce the
communication complexity by a $\OhOf{\log_M N}$ factor.

The main reason for the large communication complexity of the simulation is
the $\OhOf{N\log_M N}$ size of the search structure that the BSP algorithm
constructs to relieve the congestion caused by multiple queries passing
through the same node of the search tree. It is worth noting that if
the number of queries is small relative to the size of the search tree,
that is, if $|Q| \le N/\log_M N$, then the size of the BSP search structure
is only linear and we can perform the simulation of the algorithm with
$\OhOf{N \log_M N}$ communication complexity.  Thus, for the remainder of
this section we assume $|Q| > N/\log_M N$.

Consider a MapReduce algorithm $\cal A$ that simulates the BSP algorithm
for a smaller set of queries, namely $Q'$ of size only $\ceil{N/\log_{M}
  N}$. Given a search tree $\cT$, algorithm $\cal A$ converts $\cT$ into a
  directed acyclic graph
(DAG) $G$ (see \cite{g-rfsbt-97} for details). $G$ has $\log_M N$ levels and
$\OhOf{N/\log_M N}$ nodes in each level (thus $\OhOf{N/\log_M N}$ source nodes). Therefore the size of $G$ is
$\OhOf{N}$. Next, $\cal A$ propagates the queries of $Q'$ through $G$.  In
each round, with high probability, all queries are routed one level down in
$G$. Thus, the round complexity of $\cal A$ is still $\OhOf{\log_M N}$
while the communication complexity is $\OhOf{N \log_M N}$.

To solve the multi-search problem on the input set $Q$, we make use of
$\cal A$ as follows.  We partition the set of queries $Q$ into $\log_M N$
random subsets $Q_1, Q_2, \dots, Q_{\log_M N}$ each containing
$\OhOf{N/\log_M N}$ queries.  Next, we construct $G$ for the query set
$Q_1$ and also use it to propagate the rest of the query sets. In
particular, we proceed in $\ThetaOf{\log_M N}$ rounds. In each of the first
$\log_M N$ rounds we feed new subset $Q_i$ of queries to the
$\OhOf{N/\log_M N}$ source nodes of $G$ and propagate the queries down to
the sinks using algorithm $\cal A$. This approach can be viewed as a
pipelined execution of $\log_M N$ multi-searches on $G$.

We implement random partitioning of $Q$ by performing a random indexing for
$Q$ (Lemma~\ref{lem:index}) and assigning query with index $j$ to subset
$Q_{\ceil{j/\log_M N}}$. A node $v$ containing a query $q \in Q_i$ keeps
$q$ (by sending it to itself) until round $i$, at which point it sends $q$
to the appropriate source node of $G$.

\begin{theorem}
Given a binary search tree $\cT$ of size $N$, we can perform a multi-search
of $N$ queries over $\cT$ in the I/O-memory-bound MapReduce model in
$\OhOf{\log_M N}$ rounds with $\OhOf{N \log_M N}$ communication with high
probability.
\end{theorem}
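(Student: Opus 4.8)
The plan is to reduce the $N$-query multi-search to $\log_M N$ pipelined runs of the congestion-relieving DAG $G$ underlying Goodrich's BSP algorithm~\cite{g-rfsbt-97}, using the random-indexing primitive of Lemma~\ref{lem:index} to produce balanced random query subsets, and charging the cost to a single source-to-sink traversal of $G$ per subset. First I would build the partition: invoke Lemma~\ref{lem:index} to give the $N$ queries distinct indices $0,\ldots,N-1$ in $\OhOf{\log_M N}$ rounds and $\OhOf{N\log_M N}$ communication, with high probability. Since these indices form a random permutation, slicing them into $\log_M N$ consecutive blocks yields subsets $Q_1,\ldots,Q_{\log_M N}$, each a uniformly random subset of $Q$ of size $\OhOf{N/\log_M N}$. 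Each query then keeps itself until its release round $i$, at which point it is forwarded to a source node of $G$.

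Next I would set up the routing structure. Following~\cite{g-rfsbt-97}, I convert $\cT$ into the DAG $G$ with $\log_M N$ levels, $\OhOf{N/\log_M N}$ nodes per level, and total size $\OhOf{N}$, realized inside the generic model of Section~\ref{sec:genericMR} so that every node preserves its adjacency by keeping it (sending it to itself), at a cost of $\OhOf{N}$ communication per round. The guarantee I would import from~\cite{g-rfsbt-97} is that routing a \emph{single} random query set of size $\OhOf{N/\log_M N}$ down $G$ advances each query exactly one level per round, with every node receiving only $\OhOf{M}$ queries, except with probability $N^{-\OmegaOf{1}}$. The crucial feature is that this capacity bound is query-oblivious — it depends only on the cardinality and randomness of the subset — so the same $G$ serves all of $Q_1,\ldots,Q_{\log_M N}$.

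I would then analyze the pipeline and tally costs. Releasing $Q_i$ into the source nodes in round $i$ and advancing one level per round, the last set $Q_{\log_M N}$ reaches the sinks by round $2\log_M N - 1 = \OhOf{\log_M N}$. In any fixed round the active sets occupy distinct levels of $G$, so each level hosts exactly one $Q_i$ and per-node congestion is governed by the single-set bound above; a union bound over the $\OhOf{\log_M N}$ subsets and $\OhOf{\log_M N}$ rounds keeps the total failure probability at $N^{-\OmegaOf{1}}$. Hence with high probability every node always sends, keeps, and receives $\OhOf{M}$ items, so Theorem~\ref{thm:framework} applies. For communication, each query emits one message per level, so routing one subset costs $\OhOf{N/\log_M N}\cdot\log_M N = \OhOf{N}$ and all $\log_M N$ subsets cost $\OhOf{N\log_M N}$; adding $\OhOf{N\log_M N}$ for maintaining $G$ over $\OhOf{\log_M N}$ rounds and $\OhOf{N\log_M N}$ for the indexing step, the total communication is $\OhOf{N\log_M N}$ in $\OhOf{\log_M N}$ rounds.

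The main obstacle is the middle step: justifying that a DAG sized with respect to one subset simultaneously relieves the congestion of every other random subset as they are pipelined through it. This rests on the congestion bound of~\cite{g-rfsbt-97} being stated in a query-oblivious form and on the pipeline schedule keeping distinct subsets on distinct levels, so that the single-set bound can be reapplied level by level before taking the union bound. The residual event that some node exceeds $M$ items — which would stall a level and break the one-level-per-round progress — is precisely what the queuing strategy of Section~\ref{sec:queue} is designed to absorb.
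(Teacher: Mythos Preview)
Your proposal is correct and follows essentially the same approach as the paper: partition $Q$ via random indexing into $\log_M N$ random subsets of size $\OhOf{N/\log_M N}$, build the size-$\OhOf{N}$ DAG $G$ from~\cite{g-rfsbt-97} for one such subset, and pipeline the remaining subsets through $G$, using a union bound over subsets and levels together with the per-node congestion guarantee of~\cite{g-rfsbt-97}. Your write-up is in fact slightly more explicit than the paper's sketch about why the pipeline keeps distinct subsets on distinct levels and about the cost of maintaining $G$ across rounds.
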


\begin{proof} 
We sketch the proof here.  Let $L_1, \ldots, L_{\log_M N}$ be the $\log_M N$
levels of nodes of $G$. First, all query items in the first query batch
$Q_1$ can pass (i.e., be routed down) $L_j\ (1 \le j \le \log_M N)$ in one round with high
probability. This is because for each node $v$ in $L_j$, at most $M$
query items of $Q_1$ will be routed to $v$ with probability at least $1 -
N^{-c}$ for any constant $c$. By taking the union of all the nodes
in $L_j$, we have that with probability at least $1 - \OhOf{N/ \log_M N} \cdot
N^{-c}$, $Q_1$ can pass $L_j$ in one round. Similarly, we can prove
that any $Q_i\ (1 \le i \le \log_M N)$ can pass $L_j\ (1 \le j \le \log_M
N)$ in one round with the same probability since sets $Q_i$ have equal
distributions. Since there are $\log_M N$ batches of queries and they are
fed into $G$ in a pipeline fashion, by union bound we have that with
probability at least $1 - \log_M^2 N \cdot \OhOf{N/\log_M N} \cdot N^{-c} \ge 1 -
1/N$ (by choosing a sufficient large constant $c$) the whole process
completes within $\OhOf{\log_M N}$ rounds. The communication complexity
follows directly since we only need to send $\OhOf{\abs{G} + \abs{Q}} =
\OhOf{N}$ items in each round.
\end{proof}

\subsection{FIFO Queues in MapReduce Model}\label{sec:queue}
As mentioned at the
beginning of this section, with probability $1-N^{-c}$ for any constant $c \ge
1$ no processor in the BSP algorithm for multi-searching contains more than $M$
items. Thus, the algorithm for multi-search in the previous section can be
implemented in the I/O-memory-bound MapReduce framework with high
probability.

However, the failure of the algorithm implies a crash of a reducer in the
MapReduce framework, which is quite undesirable. In this section we present a
queuing strategy which ensures that no reducer receives more than $M$ items,
	which might be of independent interest.
  
Consider the following modified version of the generic MapReduce framework
from Section~\ref{sec:genericMR}.  In this version we still require each node
$v \in V$ to send at most $M$ items. However, instead of limiting
the number of items that a node keeps or receives to be $M$, we only
require that in every round at most $M$ different nodes send to any
given node $v$, and function $f$ takes as input a list of at most $M$
items. To accommodate the latter requirement, if a node receives or
contains more than $M$ items, the excess items are kept within the node's
input buffer and are fed into function $f$ in batches of $\OhOf{M}$ items per round
in a first-in-first-out (FIFO) order.

In this section we show that any algorithm $\cal A$ with round complexity
$R$ and communication complexity $C$ in the modified framework can be
implemented using the framework in Section~\ref{sec:genericMR} with the same asymptotic round and communication complexities.
 
We simulate algorithm $\cal A$ by implementing the FIFO queue at each node
$v$ by a doubly-linked list $L_v$ of nodes, such that $L_v \cap L_w =
\emptyset$ for all $v \neq w$ and $L_v \cap V = \emptyset$ for all $v \in
V$. Each node $v \in V$ keeps a pointer
$head_{L_v}$ to the head of its list $L_v$. In addition, $v$ also keeps
$n_{head}$, the number of query items at $head_{L_v}$. If $L_v$ is empty,
$head_{L_v}$ points at $v$ and $n_{head} = 0$. Throughout the algorithm we
maintain an invariant that for each doubly-linked list $L_v$, each node in
$L_v$ contains $[M/4, M/2]$ query items except the head node, i.e., the one
containing the last items to be processed in the queue, which contains at
most $M/2$ query items.  We simulate one round of $\cal A$ by the
following three rounds. Let $\IN(v)$ and $\OUT(v)$ denote the set of in- and
out-neighbors of node $v \in V$, respectively. That is, for each $u \in
\IN(v)$, $(u,v) \in E$ and for each $w \in \OUT(v), (v,w) \in E$.

\begin{itemize}
\item[R1.] Each node $u \in V$ that wants to send $n_{u,v}$ query
  items to $v \in \OUT(u)$, instead of sending the actual query items, 
  sends $n_{u,v}$ to $v$.

\item[R2.] Each node $v \in V$ receives a set of different values $n_{u_1,
  v}, n_{u_2, v}, \dots, n_{u_k, v}$ from its in-neighbors $u_1, u_2,
  \dots, u_k\ (k \le M)$.  For convenience we define $n_{u_0,v}
  \triangleq n_{head}$.  Next, $v$ partitions the set $\{0, 1, \dots, k\}$
  into sets $S_1, \dots, S_{m}$, $m \le k$, such that $M/4 \le \sum_{j \in
    S_i} n_{u_j, v} \le M/2$ for all $1 \le i \le m-1$ and $\sum_{j \in
    S_{m}} n_{u_j, v} \le M/2$.  W.l.o.g., assume that $0 \in S_1$.  For each
  $S_i$, we will have a corresponding node $w_i$ in the list $L_v$: We let
  $w_1 = head_{L_v}$ and for each $S_i$, $2 < i \le m$ we pick a new node
  $w_i$, create edges $(w_i, w_{i-1})$ and $(w_{i-1}, w_i)$, and send it to
  nodes $w_i$ and $ w_{i-1}$, respectively. For each $j \in S_i$, we also
  notify $u_j$ that it should send all its queries to $w_i$ by sending the
  label of $w_i$ to $u_j$.  The only exception to this rule is that if $w_1
  \not = v$ and $w_1$ contains the edge $(w_1, v)$, i.e. it is the first
  node in $L_v$. In this case, for each $j \in S_1$ each $u_j$ should send
  queries directly to $v$. Finally, we update the pointer $head_{L_v}$ to
  point to $w_m$ and update $n_{head} = \sum_{j\in S_m} n_{u_j,v}$, unless
  $w_m = v$, in which case $n_{head} = 0$.

\item[R3.] Each node $u_j \in \IN(v)$ receives the label of a node $w_i$
  from $v$ in the previous rounds. It sends all its query items to
  $w_i$. Note that if $w_i = v$, all items will be sent to $v$ directly.
  At the same time, each node $w \not \in V$, i.e. $w \in L_v$, that has an
  edge $(w,v)$ for some $v \in V$ sends all its items to $v$ and extracts
  itself from the list. The node $w$ accomplishes this by deleting all
  edges incident to $w$ and by sending to its predecessor $\pred{w}$ in the
  queue $L_v$ a new edge $(\pred{w}, v)$, thus, linking the rest of the
  queue to $v$.
\end{itemize}

\begin{theorem} \label{lem:queue}
Consider a modified MapReduce framework, where in every round each node is
required to send at most $M$ items, but is allowed to keep and receive an
unlimited number of items as long as they arrive from at most $M$ different
nodes, with excess items stored in FIFO input buffer and fed into function
$f$ in blocks of size at most $M$. Let $\cal A$ be an algorithm in this
modified MapReduce framework with $R$ round complexity and $C$
communication complexity.  Then we can implement $\cal A$ in the original
I/O-memory-bound MapReduce framework in $\OhOf{R}$ rounds and $\OhOf{C}$
communication complexity.
\end{theorem}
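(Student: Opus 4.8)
The plan is to show that the three-round simulation R1, R2, R3 described above correctly implements one round of the modified framework while respecting all three hard bounds of the original I/O-memory-bound framework (each node sends, keeps, and receives at most $M$ items). I would argue correctness by induction on the rounds of $\cal A$, carrying as the induction hypothesis exactly the stated list invariant: each doubly-linked list $L_v$ is a FIFO queue whose non-head nodes each hold between $M/4$ and $M/2$ items and whose head node holds at most $M/2$, with the front node (the one carrying the edge $(w,v)$) holding the oldest items and $head_{L_v}$ the most recent. Given the invariant, correctness of the simulated computation reduces to two observations: R3 drains exactly the front block of at most $M/2$ items into $v$, matching the "one block per round'' semantics of the modified framework, while R1 and R2 append the freshly produced items to the tail of $L_v$ in arrival order, so FIFO order is preserved.

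The heart of the argument is verifying that R2's partition of the incoming counts $n_{u_0,v}=n_{head}, n_{u_1,v},\ldots,n_{u_k,v}$ into blocks $S_1,\ldots,S_m$ with $M/4\le\sum_{j\in S_i}n_{u_j,v}\le M/2$ (and $\sum_{j\in S_m}n_{u_j,v}\le M/2$) always exists and re-establishes the invariant. A greedy left-to-right accumulation yields such a partition whenever every individual count is at most $M/2$; a single-destination batch $n_{u_j,v}\in(M/2,M]$ can be split into two pieces of size at most $M/2$ and assigned to two adjacent queue nodes, so feasibility is never an obstruction and every block stays within the bound. I would then check the two delicate update steps in detail: in R2 the old head $w_1$ is reused and $m-1$ fresh tail nodes are threaded on by creating the edges $(w_i,w_{i-1})$ and $(w_{i-1},w_i)$, and in R3 the front node sends its block to $v$, deletes its incident edges, and relinks its predecessor $\pred{w}$ to $v$. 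The case analysis must cover the boundary situations singled out in the description — when $L_v$ is empty so $head_{L_v}=v$, when $w_1=v$ so items from $S_1$ are sent directly to $v$, and when the front node coincides with the head — to confirm that $L_v$ remains a well-formed doubly-linked list with every interior block back in $[M/4,M/2]$.

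With the invariant established, the original-framework bounds follow directly. In R1 each node sends one count to each of its at most $M$ active out-neighbors; in R2 each node sends $\OhOf{k}=\OhOf{M}$ labels and edge-creation items; and in R3 each queue node $w_i$ receives $\sum_{j\in S_i}n_{u_j,v}\le M/2$ actual items from at most $M$ senders, while each $v\in V$ receives only its front block (at most $M/2$ items) plus a single relink edge. Each node of $V$ keeps only $\OhOf{1}$ extra state ($head_{L_v}$ and $n_{head}$) beyond $\cal A$'s own node state, and each queue node keeps at most $M/2$ items together with two edges, so every send, keep, and receive is $\OhOf{M}$ and hence within the bound after absorbing the constant. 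For the complexity, each round of $\cal A$ is simulated by the three rounds R1, R2, R3, giving $\OhOf{R}$ rounds; and since every count, label, edge, and item charged in these three rounds can be charged to a distinct item or active edge of the corresponding round of $\cal A$ (each active edge carries at least one item), the per-round communication is $\OhOf{C_r}$, summing to $\OhOf{C}$ overall, as claimed.

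The step I expect to be the main obstacle is precisely this invariant maintenance under the \emph{concurrent} enqueue (R2) and dequeue-with-relink (R3), since correctness, the FIFO ordering, and the hard $M$-bound all hinge on every block staying within $[M/4,M/2]$. The boundary cases in which a single node plays several roles at once — an empty queue, the exceptional case $w_1=v$, or the head simultaneously being the front node that extracts itself — are exactly where an off-by-a-constant error in a block size would break the bound, so these deserve the most careful treatment; everything else is routine accounting.
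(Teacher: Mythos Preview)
Your proposal is correct and follows the same three-round simulation (R1, R2, R3) and list invariant as the paper; indeed, you supply considerably more detail on invariant maintenance, FIFO preservation, and the boundary cases than the paper's proof, which dismisses all of this as ``easy to see.''

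The one substantive point where the paper's argument differs from yours is the round-complexity step. You assert a lock-step correspondence --- ``each round of $\cal A$ is simulated by the three rounds R1, R2, R3, giving $\OhOf{R}$ rounds'' --- but this is not quite justified as stated: the front block drained in R3 holds only $[M/4,\,M/2]$ items, whereas the modified framework may feed $f$ up to $M$ items per round, so the simulation can lag behind $\cal A$ and the correspondence is not literally round-for-round. The paper handles this by a throughput comparison instead: in each triple of rounds, node $v$ routes $\min\{\Theta(M),\,k^t_v\}$ items (where $k^t_v$ is the total of what is in $L_v$ plus what arrives), and this is within a constant factor of what $v$ routes in round $t$ of $\cal A$. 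That constant-factor throughput match is what yields $\OhOf{R}$ rounds. Your charging argument for the $\OhOf{C}$ communication bound is essentially the same as the paper's (extra queue edges are $\OhOf{1}$ per non-empty queue node, hence absorbed).
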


\begin{proof}
First, it is easy to see that our simulation ensures that each node keeps as well as
sends and receives at most $M$ items. Next, note that in every three rounds (round $3t, 3t+1, 3t+2$), each node
$v \in V$ routes $\min\{\ThetaOf{M}, k^t_v\}$ items, where $k^t_v$ is the
combined number of items in the queue $L_v$ and the number of items that
$v$'s in-neighbors send to $v$ during the three rounds. This is within a
constant factor of the number of items that $v$ routes in round $t$ in
algorithm $\cal A$. Finally, the only additional items we send in each round are the edges of
the queues $\{L_v\ |\ v \in V\}$. Note that we only need to maintain
$\OhOf{1}$ additional edges for each node of each $L_v$. And since these
nodes are non-empty, the additional edges do not contribute more than a
constant factor to the communication complexity.
\end{proof}

\vspace{-0.4cm}
\paragraph{Applications.} The DAG $G$ of the multi-search BSP algorithm~\cite{g-rfsbt-97} satisfies the requirement that at most $M$ nodes attempt
to send items to any other node. In addition, if some processor of the BSP
algorithm happens to keep more than $M$ items, the processing of these
items is delayed and can be processed in any order, including FIFO. Thus,
the requirements of Theorem~\ref{lem:queue} are satisfied.

We do not know how to modify our random indexing algorithm
in Section~\ref{sec:prefix} to fit the modified framework. Thus, we cannot
provide a Las Vegas algorithm. However, the above framework reduces the
probability of failure from $N^{-\Omega(1)}$ to the probability of failure
of the random indexing step, i.e., $N^{-\Omega(M)}$, which is much smaller
for large values of $M$.

The modified framework might be of independent interest because it allows
for an alternative way of designing algorithms for MapReduce. In
particular, it removes the burden of keeping track of the number of items
kept or sent by a node.

\subsection{Sorting}
\label{sec:sorting}
In this section, we show how to obtain a simple sorting algorithm in the
MapReduce model by using our multi-search algorithm. First, it is easy to
obtain the following brute-force sorting result, which is proved in
Appendix~\ref{sec:bruteforce}.
\begin{lemma}
\label{lem:brute-sort}
Given a set $X$ of $N$ indexed comparable items, we can sort them in
$\OhOf{\log_M N}$ rounds and $\OhOf{N^2 \log_M N}$ communication complexity
in the MapReduce model.
\end{lemma}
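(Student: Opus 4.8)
The plan is to sort by computing, for every item, its \emph{rank} --- the number of items that precede it in sorted order --- and then routing each item directly to its output position. Since the items carry distinct indices $0,\dots,N-1$, I break ties by index: for item $x_i$ I define $r_i = |\{\,j : x_j < x_i \text{ or } (x_j = x_i \text{ and } j < i)\,\}|$, so that the $r_i$ form a permutation of $\{0,\dots,N-1\}$ and the final routing delivers exactly one item to each of the $N$ output slots in a single round. The computation of all ranks is a brute-force all-pairs comparison arranged on a conceptual $N\times N$ grid whose node $(i,j)$ is responsible for the comparison bit $c_{ij}=[x_j \prec x_i]$, where $\prec$ denotes the tie-broken order. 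To make each $c_{ij}$ computable locally while respecting the I/O bound $M$, I first deliver $x_i$ to every node in row $i$ and $x_j$ to every node in column $j$. A naive one-round broadcast of a single value to $N$ grid nodes would force a node to emit $N > M$ copies, so instead I reuse the tree machinery of Lemma~\ref{lem:prefix}: for each $i$ I broadcast $x_i$ down an $(M/2)$-ary tree of height $\OhOf{\log_M N}$ whose leaves are the row-$i$ grid nodes, and symmetrically for the columns. Each internal node forwards to at most $M/2$ children, so every node sends, keeps, and receives at most $M$ items, and all $2N$ broadcasts run concurrently in $\OhOf{\log_M N}$ rounds.

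Once node $(i,j)$ holds both $x_i$ and $x_j$ it computes $c_{ij}$ in place. I then obtain each rank $r_i=\sum_j c_{ij}$ by running, for every row $i$, the bottom-up summation phase of the prefix-sum algorithm of Lemma~\ref{lem:prefix} on an $(M/2)$-ary tree over the $N$ bits $\{c_{ij}\}_j$; this again takes $\OhOf{\log_M N}$ rounds and never gathers more than $M/2$ values at any node. Finally, each $x_i$ is sent to output node $r_i$ in one additional round. For the communication bound, the grid contains $N^2$ nodes, and across each of the $\OhOf{\log_M N}$ rounds of broadcasting and summation the data that is in transit or kept alive (the broadcast copies, the comparison bits, and the partial sums) has total size $\OhOf{N^2}$; hence the overall communication complexity is $C=\OhOf{N^2\log_M N}$ and the round complexity is $R=\OhOf{\log_M N}$. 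Since every node sends, keeps, and receives at most $M$ items throughout, Theorem~\ref{thm:framework} lets me simulate the whole computation in the I/O-memory-bound MapReduce model with the same round and communication complexities, giving the claimed bound.

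The main obstacle is respecting the per-node I/O bound $M$ everywhere: the essence of brute-force sorting is an all-pairs comparison, and both spreading each item across a row or column and collecting the $N$ comparison results into a single rank would, if done directly, require a node to handle $N$ items at once, which fails once $N>M$. The fix is purely structural --- replace every $N$-way scatter and gather by an $(M/2)$-ary tree, so the only real content of the argument is checking that the fan-out $M/2$ keeps each node within the $M$ budget and that the $N^2$ grid, maintained over the $\OhOf{\log_M N}$ tree levels, accounts for exactly the claimed $\OhOf{N^2\log_M N}$ communication. Correctness of the ranks then follows directly from the correctness of the summation trees already established in Lemma~\ref{lem:prefix}, and distinctness of the tie-broken ranks guarantees that the final one-round routing is a bijection onto the output slots.
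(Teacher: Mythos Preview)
Your proposal is correct and follows essentially the same route as the paper: an $N\times N$ grid of comparison nodes, with each item broadcast along its row/column via $\ThetaOf{M}$-ary fan-out in $\OhOf{\log_M N}$ rounds, a local comparison at each grid node, and a bottom-up sum (the tree of Lemma~\ref{lem:prefix}) along each row to recover the ranks. The only cosmetic differences are your explicit tie-breaking by index and the final one-round routing to output slots, neither of which the paper spells out.
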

Combining the brute-force sorting algorithm with the multi-searching
algorithm in the previous section, we present here a simple sorting
algorithm with optimal round and communication complexities.

\begin{enumerate}
  \item Pick $\Theta(\sqrt{N})$ random pivots. Sort the pivots using
    brute-force sorting algorithm. This results in the pivots being
    assigned a unique index/label in the range $[1, \sqrt{N}]$.
  \item  Build a search tree on the set of pivots as the leaves of the tree. 
  \item Perform a multi-search on the input items over the search tree. The
    result is the label associated with each item which is equal to the
    ``bucket'' within which the input is partitioned into.
  \item  Recursively sort each bucket in parallel.
\end{enumerate}
Combined with Lemma~\ref{lem:brute-sort} it is easy to see that this sorting
algorithm runs in $\OhOf{\log_M N}$ rounds and has $\OhOf{N\log_M N}$
communication complexity with high probability.

{\raggedright \bibliographystyle{abbrv} \bibliography{par,cuckoo,paper} }
\clearpage

\begin{appendix}

\section{Brute-Force Multi-search and Sorting}
\label{sec:bruteforce}
In this section, we first present a brute force multi-search algorithm, and
then use it to design a brute force sorting algorithm.

The input for multi-search is a set $X = \{x_1, \ldots, x_n\}$ of $n$ query
items and a sorted set $Y = \{y_1, \ldots, y_n\}$ of $m$ items
corresponding the the leaves of the search tree. The goal is for each
$x_i\ (1 \le i \le n)$ to find the leaf $y_j$ such that search path for $x_i$
will terminate at $y_j$. Moreover, for each leaf $y_j\ (j = 1, \ldots,
m)$, we want to compute the number of items in $X$ whose search paths will terminate
at $y_j$. The input for sorting is a set $X = \{x_1, \ldots, x_n\}$ of $n$
items. The goal is to sort the $n$ items. We assume that set $X$ is
indexed, otherwise we can first perform the random indexing by
Lemma~\ref{lem:index}. Note that the set $Y$ is sorted thus indexed by
default.

\paragraph{Brute-Force Multi-search}
At the beginning of the algorithm, let nodes $\{p_1, \ldots, p_n\}$ be the
input nodes containing input items $x_1, \ldots, x_n$, respectively.  And
nodes $\{q_1, \ldots, q_n \}$ be the input nodes containing input items
$y_1, \ldots, y_n$, respectively. The input items are always kept during
the computation.
\begin{enumerate}
\item{Generate all pairs:} First, $p_i$ sends $x_i$ to node $v_{i,1}$ for
  all $i = 1, \ldots, n$, and $q_j$ sends $y_j$ to node $v_{1,j}$ for
  all $j = 1, \ldots, m$. Next, for $l = 1$ to $\log_M m$ do:
  \begin{itemize}
  \item For each $i \in [n]$, for each node $v_{i,j}$ containing an item
    $x_i$, it keeps $x_i$ and sends a copy of $x_i$ to nodes $v_{i,j'_1},
    \ldots, v_{i,j'_M}$, where $j'_k = (j-1) \cdot M + k$ for $1 \le k \le
    M$.
  \end{itemize}
  Similarly, for $l = 1$ to $\log_M n$ do:
  \begin{itemize}
  \item For each $j \in [m]$, for each node $v_{i,j}$ containing an item
    $y_j$, it keeps $y_j$ and sends a copy of $y_j$ to nodes $v_{i'_1,j},
    \ldots, v_{i'_M,j}$, where $i'_k = (i-1) \cdot M + k$ for $1 \le k \le
    M$.
  \end{itemize}

\item{Compare each pair of items:} Each node $v_{i,j}$ compares its item
  $x_i$ and $y_j$. If $x_i \le y_j$, $v_{i,j}$ generates $(x_i, 0)$ and
  keeps it; otherwise it generates $(x_i, 1)$ and keeps it.

\item{Add up values:} For each $i \in [n]$, let $(x_i, b_j)\ (b_j \in
  \{0,1\})$ be the item stored at node $v_{i,j}$ for $j = 1,\ldots,m$. We
  compute $k_i = \sum_{j=1}^m b_j$ in the same way as the bottom-up phase
  of computing the prefix sums in Section~\ref{sec:prefix}. Then $y_{k_i}$
  is the leaf node in $Y$ where the search path of $x_i$ ends.

  Similarly, for each $j \in [m]$, let $(x_i, b_j)\ (b_j \in \{0,1\})$ be
  the item stored at node $v_{i,j}$ for $i = 1,\ldots,n$. We compute $c_i =
  \sum_{i=1}^n b_j$, which is the number of query items in $X$ whose search
  paths end at $y_j$.
\end{enumerate}

\paragraph{Brute-Force Sorting}
The brute force sorting can be solved by the brute force multi-search
algorithm. We create a copy of $X$ and think it as the set $Y$. And then we
run the algorithm for multi-search. The $k_i$ computed for each $x_i$ is
the rank of item $x_i$.
\end{appendix}

\end{document}

%%% Local Variables:
%%% mode: latex
%%% TeX-master: "paper"
%%% End: 